% $Id: jfesample.tex,v 19:a118fd22993e 2013/05/24 04:57:55 stanton $
\documentclass[12pt]{article}
%\documentclass[preprint]{aastex}
% DEFAULT PACKAGE SETUP
\usepackage{blindtext}
\usepackage{graphicx}
\usepackage{subfigure}
\addtolength{\topmargin}{-.5in} \addtolength{\textheight}{1in}
\addtolength{\oddsidemargin}{-.6in}
\addtolength{\evensidemargin}{-.6in} \addtolength{\textwidth}{1.2in}
\usepackage{dsfont}
\usepackage{setspace,graphicx,epstopdf,amsmath,amsfonts,amssymb,amsthm,versionPO}
\usepackage{marginnote,datetime,enumitem,subfigure,rotating,fancyvrb}
\usepackage{float}
\usepackage[longnamesfirst]{natbib}
\usdate

% These next lines allow including or excluding different versions of text
% using versionPO.sty

\excludeversion{notes}		% Include notes?
\includeversion{links}          % Turn hyperlinks on?

% Turn off hyperlinking if links is excluded
\iflinks{}{\hypersetup{draft=true}}

% Notes options
\ifnotes{%
\usepackage[margin=1in,paperwidth=10in,right=2.5in]{geometry}%
\usepackage[textwidth=1.4in,shadow,colorinlistoftodos]{todonotes}%
}{%
\usepackage[margin=1in]{geometry}%
\usepackage[disable]{todonotes}%
}

% Allow todonotes inside footnotes without blowing up LaTeX
% Next command works but now notes can overlap. Instead, we'll define
% a special footnote note command that performs this redefinition.
%\renewcommand{\marginpar}{\marginnote}%

% Save original definition of \marginpar

% Workaround for todonotes problem with natbib (To Do list title comes out wrong)
\makeatletter\let\chapter\@undefined\makeatother % Undefine \chapter for todonotes

% Define note commands

%\newcommand{\textnote}[1]{\ifnotes{{\noindent\color{red}#1}}{}}

% Command to start a new page, starting on odd-numbered page if twoside option
% is selected above

% Number paragraphs and subparagraphs and include them in TOC
\setcounter{tocdepth}{2}

% JFE-specific includes:

\usepackage{indentfirst} % Indent first sentence of a new section.
\usepackage{jfe}          % JFE-specific formatting of sections, etc.

\newtheorem{proposition}{Proposition}

\newtheorem{lemma}{Lemma}[section]

\begin{document}

\setlist{noitemsep}  % Reduce space between list items (itemize, enumerate, etc.)
%\onehalfspacing      % Use 1.5 spacing
% Use endnotes instead of footnotes - redefine \footnote command

\title{Closed-form Solutions of Relativistic Black-Scholes Equations\footnotetext{}}

\author{Yanlin Qu$^{1,2}$ and Randall R. Rojas$^{2}$ \\ \\$^{1}$School of Mathematical Science\\ University of Science and Technology of China\\ \\ $^{2}$Department of Economics\\University of
California, Los Angeles \\ {\small \tt quyanlin$@$mail.ustc.edu.cn, rrojas$@$econ.ucla.edu}}

\date{}              % No date for final submission

% Create title page with no page number

\renewcommand{\thefootnote}{\fnsymbol{footnote}}

\singlespacing

\maketitle

\vspace{-.2in}
\begin{abstract}
\noindent Drawing insights from the triumph of relativistic over classical mechanics when velocities approach the speed of light, we explore a similar improvement to the seminal Black-Scholes (\cite{black1973pricing}) option pricing formula by considering a relativist version of it, and then finding a respective solution. We show that our solution offers a significant improvement over competing solutions (e.g., \cite{2016arXiv160401447R}), and obtain a new closed-form option pricing formula, containing the speed limit of information transfer $c$ as a new parameter. The new formula is rigorously shown to converge to the Black-Scholes formula as $c$ goes to infinity. When $c$ is finite, the new formula can flatten the standard volatility smile which is more consistent with empirical observations. In addition, an alternative family of distributions for stock prices arises from our new formula, which offer a better fit, are shown to converge to lognormal, and help to better explain the volatility skew.
\end{abstract}

\medskip

\medskip
\noindent \textit{Keywords}: Option pricing, Volatility smile, Stock price distribution, Klein-Gordon equation.

\thispagestyle{empty}

\clearpage

\onehalfspacing
\setcounter{footnote}{0}
\renewcommand{\thefootnote}{\arabic{footnote}}
\setcounter{page}{1}

\section{Introduction}

In agreement with the work done by \cite{2016arXiv160401447R}, we take a similar approach for the initial formulation. However, for the purpose of clarity, we simplify the notations by defining
$$\alpha=\frac{1}{\sigma^2}\left(\frac{\sigma^2}{2}-r\right),\;\beta=\frac{1}{2\sigma^2}\left(\frac{\sigma^2}{2}+r\right)^2,\;\nu=\frac{c^2}{\sigma^2}\;,$$ where $r$ is the risk-free interest rate, $\sigma$ is the volatility in the Black-Scholes model, and $c$ is the speed of light. As usual, let $S$, $K$, $T$ be the current stock price, strike price and maturity in the Black-Scholes model respectively. In addition, $m$ represents a mass and $\hbar$ is the reduced Planck constant. It is well-known that the Black-Scholes equation
\begin{equation}
\frac{\partial f}{\partial t}+rS\frac{\partial f}{\partial S}+\frac{1}{2}\sigma^2S^2\frac{\partial^2f}{\partial S^2}=rf
\end{equation}
can be mapped to the free Schr\"{o}dinger equation
\begin{equation}
i\hbar\frac{\partial\psi}{\partial\tilde{t}}=-\frac{\hbar^2}{2m}\frac{\partial^2\psi}{\partial x^2}
\end{equation}
by
$$\tilde{t}=it,\;\hbar=1,\;m=\frac{1}{\sigma^2},\;x=\log S,\;\psi=e^{-(\alpha x+\beta t)}\cdot f.$$
\noindent In relativistic quantum mechanics, the free Schr\"{o}dinger equation is replaced by the Klein-Gordon equation
\begin{equation}
-\frac{\hbar^2}{c^2}\frac{\partial^2\tilde{\psi}}{\partial\tilde{t}^2}+\hbar^2\frac{\partial^2\tilde{\psi}}{\partial x^2}=m^2c^2\tilde{\psi}.
\end{equation}
in the sense that $\tilde{\psi}\cdot e^{imc^2\tilde{t}/\hbar}\rightarrow\psi$ as $c\rightarrow\infty$ (see \cite{10.2307/24890097}). Therefore we apply the inverse map between (1) and (2) with
$$\tilde{\psi}=e^{-imc^2\tilde{t}/\hbar}\cdot\psi=e^{\nu t}\cdot\psi=e^{-(\alpha x+(\beta-\nu) t)}\cdot f$$
on (3) to obtain the relativistic generalized Black-Scholes equation
\begin{equation}
\frac{1}{2\nu}\frac{\partial^2f}{\partial t^2}+\left(1-\frac{\beta}{\nu}\right)\frac{\partial f}{\partial t}+rS\frac{\partial f}{\partial S}+\frac{1}{2}\sigma^2S^2\frac{\partial^2f}{\partial S^2}=\left(r-\frac{\beta^2}{2\nu}\right)f.
\end{equation}
It is clear that (4)$\rightarrow$(1) as $\nu\rightarrow\infty$. In that paper, the authors then tried to solve (4) approximately by ignoring several important terms. However, even with such simplifications, the integral in their solution fails to converge for the plain vanilla options cases.

In Section 2 we find an explicit solution of (4) which leads to a closed-form option pricing formula, and provide a rigorous proof of convergence as $c\rightarrow\infty$. In Section 3, we show how the new formula and corresponding distribution can flatten the volatility smile/skew from both a theoretical foundation and empirical tests.

\section{The Model} \label{sec:Model}

\subsection{Solving the Equation}

To illustrate our technique to solve (4) explicitly, we only consider the call option case with $\alpha>0$. Other cases can be solved in a similar way. By adding the terminal condition to (3) and replacing $\tilde{t}$ by $it$, we instead need to solve
\begin{equation}
\left\{
\begin{aligned}
&\frac{\partial^2\tilde{\psi}}{\partial t^2}+c^2\frac{\partial^2\tilde{\psi}}{\partial x^2}=\nu^2\tilde{\psi},\;\;\;\;\;\;(x,t)\in\mathbb{R}\times(0,T)\\
&\tilde{\psi}_T=e^{-(\alpha x+(\beta-\nu)T)}(e^x-K)^+\\
\end{aligned}
\right..
\end{equation}
Let $\varphi(x,t)=\tilde{\psi}(cx,T-t)$, then the terminal condition becomes the initial condition and we can solve the equation in the upper half plane $\mathbb{H}=\mathbb{R}\times\mathbb{R}^+$ instead of $\mathbb{R}\times(0,T)$. (5) now can be written as
\begin{equation}
\left\{
\begin{aligned}
&\Delta\varphi=\nu^2\varphi,\;\;\;\;\;\;(x,t)\in \mathbb{H}\\
&\varphi_0=e^{-(\beta-\nu)T}(e^{(1-\alpha)cx}-Ke^{-\alpha cx})\mathds{1}_{cx>\log K}\\
\end{aligned}
\right.
\end{equation}
where $\Delta$ is the Laplacian and $\mathds{1}$ is the indicator function. Since $\alpha<\frac{1}{2}$ by definition, $\varphi_0$  grows exponentially as $x\rightarrow +\infty$ so that we can not apply the Fourier Transform to it. Also, it is not smooth at $\frac{\log K}{c}$ which makes it impossible to obtain an intuitive solution. Therefore, the main insight of our technique is to separate $\varphi_0$ into two parts, a simple part $\varphi_{10}$ and an integrable part $\varphi_{20}$ with $\varphi_0=\varphi_{10}-\varphi_{20}$ and $\varphi=\varphi_{1}-\varphi_{2}$, where both can be solved analytically. Figure 1 illustrates our proposed decomposition.
\begin{figure}[H]
  \centering
  \includegraphics[width=0.5\textwidth]{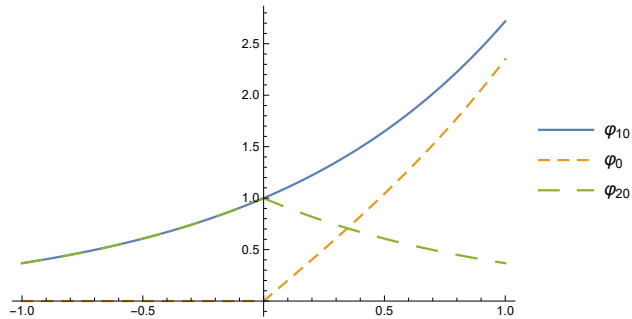}
  \caption{An example of the decomposition}
\end{figure}
For the simple part,
\begin{equation}
\left\{
\begin{aligned}
&\Delta\varphi_{1}=\nu^2\varphi_{1},\;\;\;\;\;\;(x,t)\in \mathbb{H}\\
&\varphi_{10}=e^{-(\beta-\nu)T+(1-\alpha)cx}\\
\end{aligned}
\right.
\end{equation}
we can derive a pair of solutions after several simple observations, namely,
$$\varphi_1=e^{-(\beta-\nu)T+(1-\alpha)cx\pm\sqrt{\nu^2-(1-\alpha)^2c^2}\cdot t}$$
where we need $c>c_0=\sigma^2/2+r$ to guarantee the square root is real. We now substitute $\varphi_1$ for $f_1$ in (4),
$$f_{10}=\tilde{\psi}_1(x,0)e^{\alpha x}=\varphi_1(x/c,T)e^{\alpha x}=Se^{(\pm\sqrt{\nu^2-(1-\alpha)^2c^2}-(\beta-\nu))T}.$$
If we choose $``+"$, $f_{10}\rightarrow\infty$ as $c\rightarrow\infty$. Since $f_{20}$ corresponds to the integrable part which is bounded, it is impossible for $f_0=f_{10}-f_{20}$ to converge to the Black-Scholes formula as $c\rightarrow\infty$. Therefore $``-"$ is the only choice,
$$f_{10}=Se^{(-\sqrt{\nu^2-(1-\alpha)^2c^2}-(\beta-\nu))T}.$$

For the integrable part,
\begin{equation}
\left\{
\begin{aligned}
&\Delta\varphi_{2}=\nu^2\varphi_{2},\;\;\;\;\;\;(x,t)\in \mathbb{H}\\
&\varphi_{20}=e^{-(\beta-\nu)T}(e^{(1-\alpha)cx}\mathds{1}_{cx\leq\log K}+Ke^{-\alpha cx}\mathds{1}_{cx>\log K})\\
\end{aligned}
\right.
\end{equation}
notice that $0<\alpha<1/2$ so that $\varphi_{20}$ has an exponential decay as $x\rightarrow\pm\infty$. Therefore we can apply a Fourier Transform with respect to $x$ to obtain
\begin{equation}
\left\{
\begin{aligned}
&\frac{\partial^2\hat{\varphi}_2}{\partial t^2}(\xi,t)=(\nu^2+\xi^2)\hat{\varphi}_2(\xi,t),\;\;\;\;\;\;(\xi,t)\in\mathbb{H}\\
&\hat{\varphi}_{20}(\xi)=\frac{e^{-(\beta-\nu)T}K^{(1-\alpha)-\frac{i\xi}{c}}}{c\sqrt{2\pi}((1-\alpha)-\frac{i\xi}{c})(\alpha+\frac{i\xi}{c}))}\\
\end{aligned}
\right.
\end{equation}
where we have used the following two simple formulas $(a>0)$

\begin{equation}
\begin{aligned}
&\widehat{e^{-at}\mathds{1}_{t>k}}=\frac{1}{\sqrt{2\pi}}\int_\mathbb{R}e^{-at}\mathds{1}_{t>k}\cdot e^{-i\omega t}dt
=\frac{1}{\sqrt{2\pi}}\int_k^\infty e^{-(a+i\omega)t}dt=\frac{e^{-k(a+i\omega)}}{\sqrt{2\pi}(a+i\omega)},\\
\\
&\widehat{e^{at}\mathds{1}_{t\leq k}}=\frac{1}{\sqrt{2\pi}}\int_\mathbb{R}e^{at}\mathds{1}_{t\leq k}\cdot e^{-i\omega t}dt
=\frac{1}{\sqrt{2\pi}}\int_{-\infty}^k e^{(a-i\omega)t}dt=\frac{e^{k(a-i\omega)}}{\sqrt{2\pi}(a-i\omega)}.\\
\end{aligned}\nonumber
\end{equation}

Solutions of (9) should be $\hat{\varphi}_2=e^{\pm\sqrt{\nu^2+\xi^2}\cdot t}\hat{\varphi}_{20}.$
In order to apply the inverse Fourier Transform on $\hat{\varphi}_2$ to obtain $\varphi_2$, the solutions of (8), $\hat{\varphi}_2$ must be integrable with respect to $\xi$, so $``-"$ is the only choice
$$\varphi_2(x,t)=\frac{1}{\sqrt{2\pi}}\int_\mathbb{R}\hat{\varphi}_{20}(\xi)\cdot e^{-\sqrt{\nu^2+\xi^2}\cdot t+ix\xi}d\xi.$$
We now return $\varphi_2$ to $f_2$ in (4) with $y=\xi/c$
$$f_{20}=\tilde{\psi}_2(x,0)e^{\alpha x}=\varphi_2(x/c,T)e^{\alpha x}=\frac{e^{-\beta T}}{2\pi}\frac{S^\alpha}{K^{\alpha-1}}\int_\mathbb{R}\frac{(S/K)^{iy}\cdot e^{-(\sqrt{c^2y^2+\nu^2}-\nu)T}}{(\alpha+iy)(1-\alpha-iy)}dy.$$
Finally, we get the formula for this case,
$$f_0=f_{10}-f_{20}=e^{-\beta T}\left(Se^{(\nu-\sqrt{\nu^2-c^2(1-\alpha)^2})T}+\frac{1}{2\pi}\frac{S^\alpha}{K^{\alpha-1}}\int_\mathbb{R}\frac{(S/K)^{iy}e^{-(\sqrt{c^2y^2+\nu^2}-\nu)T}}{(\alpha+iy)(\alpha+iy-1)}dy\right).$$
\subsection{New Formulas}
For $\alpha\neq 0$, our technique is valid for both European call and put options. Therefore, after repeating the procedure above, we have
\begin{equation}
\begin{aligned}
&C_c(r,\sigma,S,T,K)=e^{-\beta T}\Biggl(Se^{(\nu-\sqrt{\nu^2-c^2(1-\alpha)^2})T}-Ke^{(\nu-\sqrt{\nu^2-c^2\alpha^2})T}\mathds{1}_{\alpha<0}\\
&\;\;\;\;\;\;\;\;\;\;\;\;\;\;\;\;\;\;\;\;\;\;\;\;\;\;\;\;\;\;+\frac{1}{2\pi}\frac{S^\alpha}{K^{\alpha-1}}\int_\mathbb{R}\frac{(S/K)^{iy}e^{-(\sqrt{c^2y^2+\nu^2}-\nu)T}}{(\alpha+iy)(\alpha+iy-1)}dy\Biggl)\\
&P_c(r,\sigma,S,T,K)=e^{-\beta T} \Biggl(Ke^{(\nu-\sqrt{\nu^2-c^2\alpha^2})T}\mathds{1}_{\alpha>0}\\
&\;\;\;\;\;\;\;\;\;\;\;\;\;\;\;\;\;\;\;\;\;\;\;\;\;\;\;\;\;\;+\frac{1}{2\pi}\frac{S^\alpha}{K^{\alpha-1}}\int_\mathbb{R}\frac{(S/K)^{iy}e^{-(\sqrt{c^2y^2+\nu^2}-\nu)T}}{(\alpha+iy)(\alpha+iy-1)}dy \Biggl).
\end{aligned}
\end{equation}

However, our technique fails when $\alpha=0$. In practice, with a proper computational accuracy, it is almost impossible for $\alpha=0$ to hold exactly. Even if it does hold, since we never know the exact value of $\sigma$, we can add a small perturbation to $\sigma$  to avoid this case. Although there is an improper integral in each formula, notice that the integrand has an exponential decay so that we can use numerical integration on finite intervals to approximate it with high efficiency.

In addition, we have a new put-call parity,
$$C_c(r,\sigma,S,T,K)-P_c(r,\sigma,S,T,K)=e^{-\beta T}(Se^{(\nu-\sqrt{\nu^2-c^2(1-\alpha)^2})T}-Ke^{(\nu-\sqrt{\nu^2-c^2\alpha^2})T}).$$

\subsection{Convergence}
Since $(4)\rightarrow(1)$ as $c\rightarrow\infty$, our new formulas should converge to the Black-Scholes formulas. However, in Section 2.1, we had ruled out several solutions of (4) because they can not converge to the solutions of (1). Therefore, it is possible that our new formulas fail to converge as well. Fortunately, we can eliminate such possibility in this section by providing a rigorous mathematical proof of the convergence. To illustrate the method of the proof explicitly, we only consider the put option case with $\alpha<0$. Other cases can be proved in a similar way. Before we start, it is convenient for us to recall the Black-Scholes formulas,
\begin{equation}
C(r,\sigma,S,T,K)=SN(d_1)-Ke^{-rT}N(d_2)
\end{equation}
\begin{equation}
P(r,\sigma,S,T,K)=Ke^{-rT}N(-d_2)-SN(-d_1)
\end{equation}
where
\begin{equation}
d_1=\frac{\log{S/K}+(r+\frac{\sigma^2}{2})T}{\sigma\sqrt{T}}\;\;\;
d_2=\frac{\log{S/K}+(r-\frac{\sigma^2}{2})T}{\sigma\sqrt{T}}.
\end{equation}
To build a connection between $N(\cdot)$ (the CDF of the normal distribution) in (11) and (12) and the improper integral in (10), we establish the following interesting lemma.
\begin{lemma}
Let $\theta>0$ then
\begin{equation}
N(\tau)=\frac{i}{2\pi}\int_\mathbb{R}\frac{e^{-\frac{1}{2}(x+i\theta)^2-i\tau(x+i\theta)}}{x+i\theta}dx,\;\;\forall\tau\in\mathbb{R}.
\end{equation}
\end{lemma}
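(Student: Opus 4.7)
Denote the right-hand side by $F(\tau)$. The plan is to show $F'(\tau) = N'(\tau)$ and that $F$ and $N$ agree at $\tau = -\infty$.

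First I would check absolute convergence and justify differentiation under the integral sign. Since $\theta > 0$, one has $|x+i\theta| \ge \theta$ and
\[
\bigl|e^{-\tfrac12 (x+i\theta)^2 - i\tau(x+i\theta)}\bigr| = e^{-x^2/2 + \theta^2/2 + \tau\theta},
\]
which, divided by $|x+i\theta|$, is an integrable Gaussian in $x$ uniformly in $\tau$ on any compact set. This legitimizes both the absolute convergence of $F(\tau)$ and the interchange of $\partial_\tau$ with the integral, yielding
\[
F'(\tau) \;=\; \frac{1}{2\pi}\int_{\mathbb{R}} e^{-\tfrac12 (x+i\theta)^2 - i\tau(x+i\theta)}\,dx.
\]

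Next I would complete the square, which is the algebraic heart of the argument:
\[
-\tfrac12(x+i\theta)^2 - i\tau(x+i\theta) \;=\; -\tfrac12\bigl(x + i(\theta+\tau)\bigr)^2 \;-\; \tfrac{\tau^2}{2}.
\]
Using the standard contour-shift identity $\int_{\mathbb{R}} e^{-\tfrac12 (x+ia)^2}\,dx = \sqrt{2\pi}$ for any real $a$ (valid since $e^{-z^2/2}$ is entire with Gaussian decay along horizontal lines, so the rectangular contour between $\mathbb{R}$ and $\mathbb{R}+ia$ contributes nothing), I obtain
\[
F'(\tau) \;=\; \frac{e^{-\tau^2/2}}{\sqrt{2\pi}} \;=\; N'(\tau).
\]

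Finally I need a boundary value to pin down the additive constant. Using the absolute bound from the first step,
\[
|F(\tau)| \;\le\; \frac{1}{2\pi\theta}\int_{\mathbb{R}} e^{-x^2/2 + \theta^2/2 + \tau\theta}\,dx \;=\; \frac{e^{\theta^2/2 + \theta\tau}}{\theta\sqrt{2\pi}} \;\xrightarrow[\tau\to-\infty]{}\; 0,
\]
where the decay comes from $\theta > 0$. Since $N(\tau) \to 0$ as $\tau \to -\infty$ as well, the identity $F = N$ follows. I do not anticipate any serious obstacle: the only subtlety is the horizontal contour shift for the Gaussian, which is a routine application of Cauchy's theorem, and the exponential factor $e^{\theta\tau}$ produced by the hypothesis $\theta > 0$ is precisely what forces $F(-\infty) = 0$ and selects the correct branch (mirroring the role $\theta > 0$ already played in selecting the decaying solution in Section~2.1).
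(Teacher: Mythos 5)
Your proposal is correct, and its first half coincides with the paper's: both differentiate under the integral sign using the dominating function $e^{-x^2/2+\theta^2/2+\tau\theta}$ (uniform on compact $\tau$-sets) and arrive at $F'(\tau)=e^{-\tau^2/2}/\sqrt{2\pi}=N'(\tau)$, the only cosmetic difference being that you complete the square and shift a pole-free horizontal contour while the paper reads the same integral off as a Fourier transform of the Gaussian. Where you genuinely diverge is in pinning down the constant of integration. The paper evaluates $F$ at the single point $\tau=0$: it integrates $h(z)=\frac{i}{2\pi}e^{-z^2/2}/z$ around a rectangle of height $2\theta$ straddling the real axis, applies the residue theorem to the simple pole at the origin, shows the vertical sides vanish as $R\to\infty$, and uses the oddness $h(-z)=-h(z)$ to equate the two horizontal contributions, concluding $F(0)=\frac{1}{2}=N(0)$. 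You instead send $\tau\to-\infty$ and use the bound $|F(\tau)|\le e^{\theta^2/2+\theta\tau}/(\theta\sqrt{2\pi})\to 0$ together with $N(-\infty)=0$. Your route is the more elementary one --- no residues and no pole ever enter, only Cauchy's theorem for an entire integrand --- and it makes the role of the hypothesis $\theta>0$ completely transparent, since that sign is exactly what makes $e^{\theta\tau}$ decay as $\tau\to-\infty$. The paper's route stays entirely within the finite-$\tau$ picture and produces an exact interior evaluation via a pleasant symmetry, at the cost of invoking the residue theorem. Both arguments are complete and rigorous.
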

\begin{proof}
Denote the integrand above with $f(x,\tau)$ (including the constant before the integral) so
$$\frac{\partial f}{\partial \tau}(x,\tau)=\frac{1}{2\pi}e^{-\frac{1}{2}(x+i\theta)^2-i\tau(x+i\theta)}.$$
Both $f$ and $\frac{\partial f}{\partial \tau}$ are continuous in $\mathbb{R}$. In addition, $\int_\mathbb{R}\frac{\partial f}{\partial \tau}(x,\tau)dx$ converges uniformly in any finite interval of $\tau$ since
$$\left|\frac{\partial f}{\partial \tau}(x,\tau)\right|\leq e^{-\frac{1}{2}x^2+\frac{1}{2}\theta^2+\tau\theta}.$$
By Theorem 11 in \cite{trench2012functions}, we have
$$\frac{\partial}{\partial\tau}\int_\mathbb{R}f(x,\tau)dx=\int_\mathbb{R}\frac{\partial f}{\partial\tau}(x,\tau)dx=e^{\tau\theta+\frac{\theta^2}{2}}\frac{1}{2\pi}\int_{\mathbb{R}}e^{-\frac{x^2}{2}-ix(\tau+\theta)}dx=\frac{1}{\sqrt{2\pi}}e^{-\frac{\tau^2}{2}}$$
where the last equality holds because of the Fourier Transform from $x$ to $\tau+\theta$. Therefore the derivatives of both sides of (14) are always the same so we only need to prove that (14) holds when $\tau=0$. Consider the complex integral of $$h(z)=\frac{i}{2\pi}\frac{e^{-\frac{z^2}{2}}}{z}$$ on the rectangular contour $\Gamma_R$ in the following figure (anticlockwise).
\begin{figure}[H]
  \centering
  \includegraphics[width=0.5\textwidth]{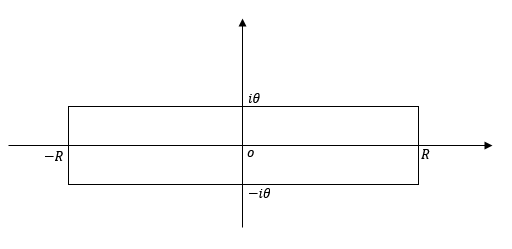}
  \caption{The rectangular contour $\Gamma_R$}
\end{figure}
\noindent  Notice that $h$ has a simple pole at the origin, by residue theorem,
$$\oint_{\Gamma_R}h(z)dz=2\pi iRes(h,0)=2\pi i\lim_{z\rightarrow 0}zh(z)=-1.$$
Since $$|h(z)|=\left|\frac{i}{2\pi}\frac{e^{-\frac{(x+iy)^2}{2}}}{x+iy}\right|\leq\frac{1}{2\pi}\frac{e^{-\frac{1}{2}R^2+\frac{1}{2}\theta^2}}{R}\;{\rm{for}}\;x=\pm R\;and\;|y|\leq\theta,$$
the integral on two vertical segments vanishes as $R\rightarrow\infty$.
Notice that the integral on two horizontal segments are the same because $h(z)=-h(-z)$. Let $R\rightarrow\infty$ and finally we have
$$\int_\mathbb{R}f(x,0)dx=-\frac{1}{2}\lim_{R\rightarrow\infty}\oint_{\Gamma_R}h(z)dz=\frac{1}{2}=N(0).$$
\end{proof}
With the lemma proved, we can prove the convergence now.
\begin{proposition}
Let $\alpha<0$ then $P_c\rightarrow P$ as $c\rightarrow\infty$.
\end{proposition}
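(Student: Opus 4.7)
The plan is to pass to the limit $c\to\infty$ inside the improper integral appearing in $P_c$, and then decompose the resulting limit by partial fractions so that each piece can be matched, via Lemma 2.1, to $N(-d_1)$ or $N(-d_2)$.

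For the limit passage I would first rewrite
$$\sqrt{c^2y^2+\nu^2}-\nu=\frac{c^2y^2}{\sqrt{c^2y^2+\nu^2}+\nu}$$
and use $\nu=c^2/\sigma^2$ to see that the right-hand side tends to $\sigma^2y^2/2$ pointwise in $y$. Since the quantity is non-negative, $e^{-(\sqrt{c^2y^2+\nu^2}-\nu)T}\le 1$ uniformly in $c$, and the rational factor $|(\alpha+iy)(\alpha+iy-1)|^{-1}$ decays like $1/y^2$ and is therefore integrable on $\mathbb{R}$. Dominated convergence then yields
$$\lim_{c\to\infty}P_c=e^{-\beta T}\,\frac{S^\alpha}{K^{\alpha-1}}\,\frac{1}{2\pi}\int_{\mathbb{R}}\frac{(S/K)^{iy}\,e^{-\sigma^2y^2T/2}}{(\alpha+iy)(\alpha+iy-1)}\,dy.$$

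Next I would split the integrand by partial fractions, $1/[(\alpha+iy)(\alpha+iy-1)]=1/(\alpha+iy-1)-1/(\alpha+iy)$, and apply the substitution $y=\eta/(\sigma\sqrt{T})$ so that the Gaussian factor becomes $e^{-\eta^2/2}$. Each resulting integral takes the shape
$$-i\int_{\mathbb{R}}\frac{e^{-\eta^2/2+i\eta L}}{\eta+i\theta_j}\,d\eta,\qquad L=\frac{\log(S/K)}{\sigma\sqrt{T}},$$
with $\theta_1=(1-\alpha)\sigma\sqrt{T}>0$ and $\theta_2=-\alpha\sigma\sqrt{T}>0$; the positivity of $\theta_2$ is precisely where the hypothesis $\alpha<0$ enters. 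Applying Lemma 2.1 with $\theta=\theta_j$ and $\tau_j=-\theta_j-L$, each integral evaluates to $-2\pi e^{\theta_j^2/2+L\theta_j}N(-\theta_j-L)$.

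To conclude, a direct computation using $\alpha=\frac{1}{2}-r/\sigma^2$ shows that $-\theta_1-L=-d_1$ and $-\theta_2-L=-d_2$. Collecting all exponential prefactors and invoking the elementary identities $(1-\alpha)^2\sigma^2/2=\beta$ and $\beta-\alpha^2\sigma^2/2=r$ (both immediate from the definitions of $\alpha$ and $\beta$ given in Section~1) reduces the coefficient of $N(-d_1)$ to $-S$ and that of $N(-d_2)$ to $Ke^{-rT}$, which gives $\lim_{c\to\infty}P_c=Ke^{-rT}N(-d_2)-SN(-d_1)=P$. The main obstacle is entirely bookkeeping: tracking signs through the substitution, checking both $\theta_j>0$ so that Lemma 2.1 is applicable, and verifying the two $\alpha$-$\beta$-$r$ identities that make the exponentials collapse to the Black-Scholes form. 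The analytic content is fully carried by dominated convergence and Lemma 2.1.
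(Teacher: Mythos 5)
Your proposal is correct and follows essentially the same route as the paper: interchange the limit and the integral, split by partial fractions, and invoke Lemma~2.1 with $\theta=-\alpha\sqrt{\sigma^2T}$, $\tau=-d_2$ for one term and $\theta=(1-\alpha)\sqrt{\sigma^2T}$, $\tau=-d_1$ for the other, with the hypothesis $\alpha<0$ entering exactly where you say, to guarantee $\theta>0$. The only (harmless) difference is the dominating bound --- you bound the exponential by $1$ and use the $O(1/y^2)$ decay of the rational factor, whereas the paper bounds the rational factor by a constant and keeps the exponential decay --- and you carry out explicitly the ``standard simplification'' of the prefactors that the paper leaves to the reader.
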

\begin{proof}
Denote the integrand in (10) with $f(y,c)$ (including the constant before the integral). We have
\begin{equation}
\lim_{c\rightarrow\infty}\sqrt{c^2y^2+\nu^2}-\nu=\lim_{c\rightarrow\infty}\frac{c^2y^2}{\sqrt{c^2y^2+\nu^2}+\nu}
=\lim_{c\rightarrow\infty}\frac{y^2\sigma^2}{\sqrt{\frac{y^2\sigma^4}{c^2}+1}+1}=\frac{y^2\sigma^2}{2}
\end{equation}
and then
$$|f(y,c)|\leq\frac{e^{-\beta T}}{2\pi}\frac{S^\alpha}{K^{\alpha-1}}\frac{e^{-(\sqrt{c^2y^2+\nu^2}-\nu)T}}{|\alpha||1-\alpha|}\leq\widetilde{C}
e^{-\frac{y^2\sigma^2T}{\sqrt{\frac{y^2\sigma^4}{c_0^2}+1}+1}}$$
where $\widetilde{C}$ is a constant and $c_0=\sigma^2/2+r$ is the lower bound of $c$. Therefore, $f$ is dominated by a $c$-independent function with an exponential decay so that $\int_\mathbb{R}f(y,c)dy$ converges uniformly in $[c_0,\infty)$.
By theorem 10 in \cite{trench2012functions} and (15),
\begin{equation}
\begin{aligned}
\lim_{c\rightarrow\infty}P_c&=\lim_{c\rightarrow\infty}\int_\mathbb{R}f(y,c)dy=\int_\mathbb{R}\lim_{c\rightarrow\infty}f(y,c)dy\\
&=\frac{e^{-\beta T}}{2\pi}\frac{S^\alpha}{K^{\alpha-1}}\int_\mathbb{R}\frac{(S/K)^{iy}e^{-\frac{y^2\sigma^2T}{2}}}{(\alpha+iy)(\alpha+iy-1)}dy\\
&=-\frac{e^{-\beta T}}{2\pi}\frac{S^\alpha}{K^{\alpha-1}}\int_\mathbb{R}\frac{(S/K)^{iy}e^{-\frac{y^2\sigma^2T}{2}}}{\alpha+iy}dy
+\frac{e^{-\beta T}}{2\pi}\frac{S^\alpha}{K^{\alpha-1}}\int_\mathbb{R}\frac{(S/K)^{iy}e^{-\frac{y^2\sigma^2T}{2}}}{\alpha+iy-1}dy
\end{aligned}
\end{equation}
Let $\theta=-\alpha\sqrt{\sigma^2T}$, $\tau=-d_2$, $x=y\sqrt{\sigma^2T}$ in lemma 2.1, after standard simplification, we will find that the first term in (16) equals to the first term in (12). Let $\theta=(1-\alpha)\sqrt{\sigma^2T}$, $\tau=-d_1$, $x=y\sqrt{\sigma^2T}$ in lemma 2.1, after standard simplification, we will find that the second term in (16) equals to the second term in (12). Then the proposition is proved.
\end{proof}
\section{Distributions And Volatility Smile/Skew}
\subsection{Flatten Standard Smiles Directly}
After proving the convergence, now we can turn to study the difference between our new formulas and the Black-Scholes formulas. We mainly study formulas for call options here. If we fix $r=0.1,\;\sigma=0.5,\;S=50,\;T=1,\;c=3$, then $C-C_c$ is a function of $K$ which is plotted in the following Figure 3(a). Figure 3(b) shows that $C_c$ is a monotonic increasing function of $\sigma$ like $C$. If we meet a standard volatility smile, the implied volatility will be lower than $\sigma$ in $C$ when $K\approx S$ and will be higher when $|K-S|\gg 0$. By the monotonicity, it indicates that the Black-Scholes formula overprices options when $K\approx S$ and underprices options when $|K-S|\gg 0$. From Figure 3(a) we can see that our new formula can reduce this mispricing almost perfectly. We can expect that our new formula can flatten the smile.
\begin{figure}[htbp]
\centering
  \subfigure[The difference between two formulas]{
  \begin{minipage}{7cm}
  \centering
  \includegraphics[width=1\textwidth]{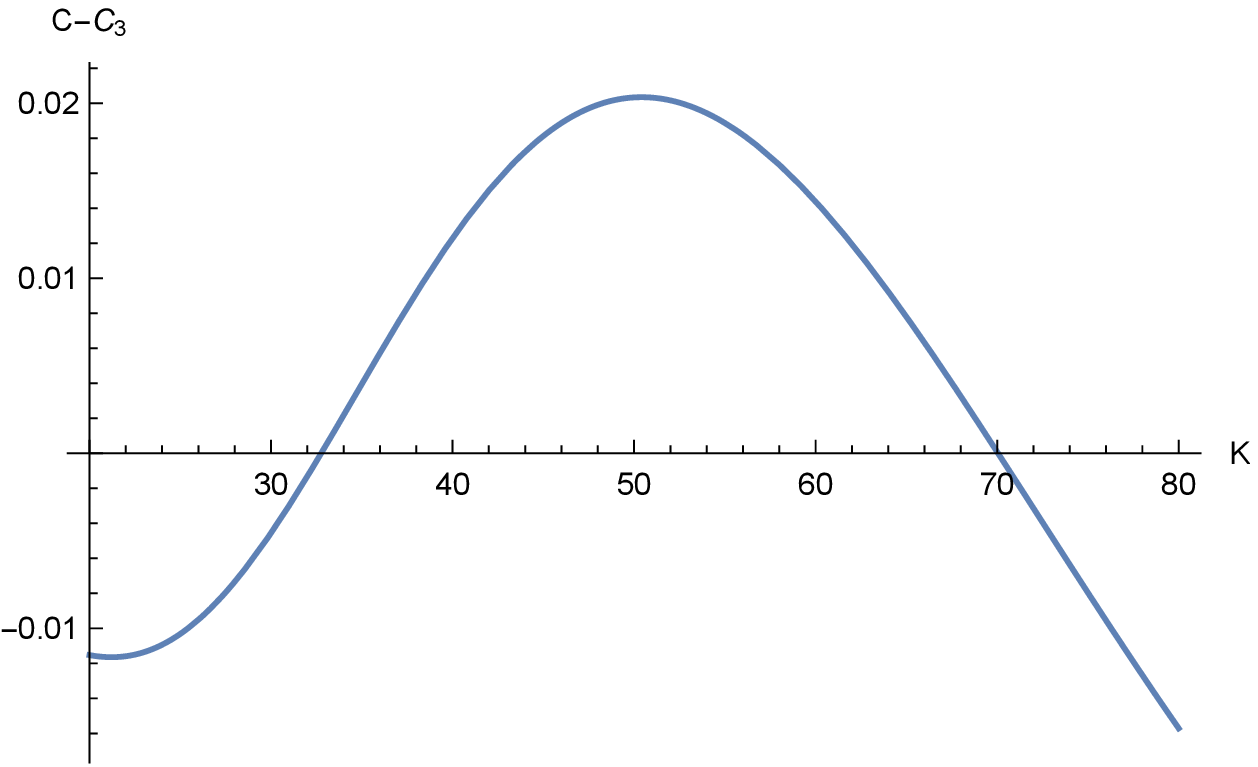}
  \end{minipage}
  }
  \subfigure[The monotonicity of $C_c$ with respect to $\sigma$]{
  \begin{minipage}{7cm}
  \centering
  \includegraphics[width=1.2\textwidth]{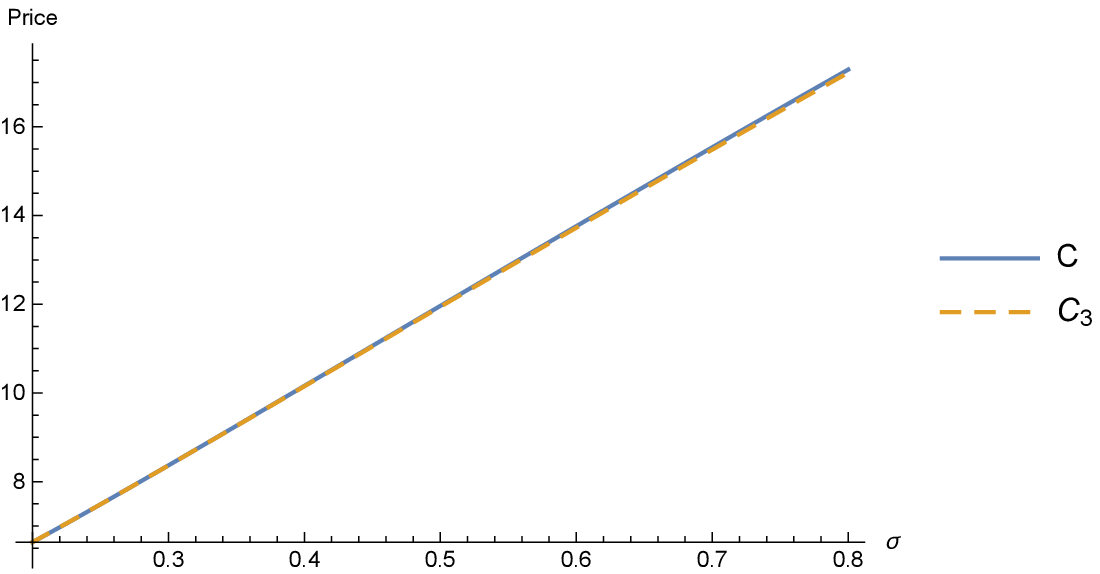}
  \end{minipage}
  }
  \caption{The difference and the monotonicity}
\end{figure}

Now we use the data from Appendix B of \cite{fengler2009arbitrage} to do an empirical test. It is an implied volatility table of the options on DAX index, June 13, 2000. The DAX spot price on that day is $S=7268.91$. In fact, to price options on stock indices or currency options, we need to generalize our new formula by replacing $S$ by $Se^{-qT}$ so that it can deal with options on stocks paying known dividend yields $q$. For further discussion of this, see e.g., \cite[p. 402]{hull2016options}. Fortunately, we do not need to do this here because DAX is a performance index with $q=0$. We still choose $c=3$ and the result is shown in Figure 4.
\begin{figure}[H]
  \centering
  \includegraphics[width=0.8\textwidth]{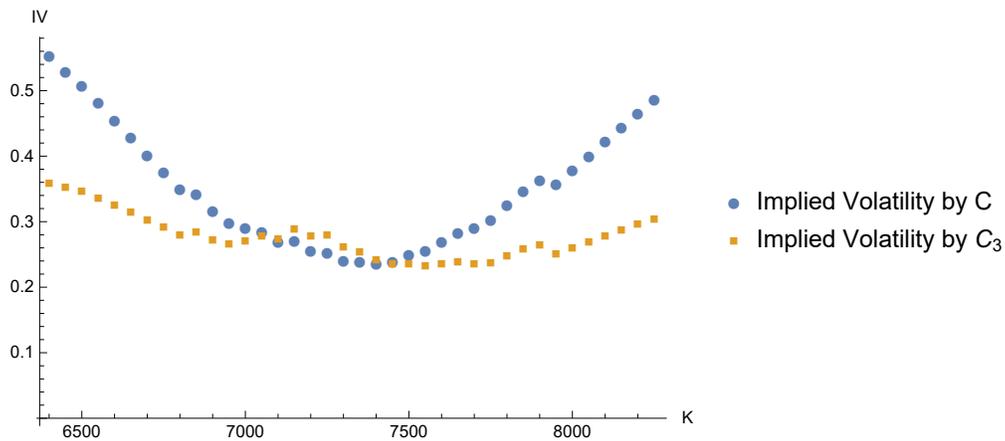}
  \caption{Flatten the smile with $C_3$}
\end{figure}
As expected, our new formula can flatten the smile greatly. With $c=3$, all the points are pushed to around $0.27$. Notice that the points around $S=7268.91$ rise up slightly while the points far from $S=7268.91$ fall down obviously. It is exactly what we want to flatten a standard smile. In fact, the smile here is still a little different from a standard smile since the lowest point (around 7400) is slightly bigger than $S=7268.91$. However, according to
\cite[p. 149]{derman2016volatility}, the smile of those currency options between ``equally powerful" currencies (e.g. USD, EUR) can be a standard one. In that case, our new formula may produce a horizontal line.

The choice $c=3$ in this subsection is quite arbitrary but it has done a good job. Recall that the real velocity of light is about $3\times 10^8$. It is an interesting coincidence. Although $c$ is a constant in physics, we need not to fix $c=3$ from now on. As we know, light travels at different speed in different mediums. At the same time, different markets have different smiles \cite[p. 131]{derman2016volatility}. Therefore, it is reasonable for us to choose different $c$ when we deal with different kinds of options. For example, when we deal with volatility skews of equity options which are more common than smiles, we may need to choose a smaller $c$ in the next subsection.

\subsection{Distributions and Skews}
We use the formula from \cite{breeden1978prices} to obtain a new family of distribution for stock price,
\begin{equation}
g_c(K;r,\sigma,S,T)=e^{rT}\frac{\partial^2 C_c}{\partial K^2}=e^{(r-\beta)T}\frac{1}{2\pi}\frac{S^\alpha}{K^{1+\alpha}}\int_\mathbb{R}(S/K)^{iy}e^{-(\sqrt{c^2y^2+\nu^2}-\nu)T}dy
\end{equation}
where we do differentiation under the integral sign. We have done the same thing when we prove lemma 2.1. So its validity can be verified in a similar way. Recall the lognormal distribution in the Black-Scholes model with the same parameters,
\begin{equation}
g(K;r,\sigma,S,T)=\frac{1}{K\sqrt{2\pi\sigma^2T}}e^{-\frac{(\log(K/S)+\alpha\sigma^2T)^2}{2\sigma^2T}}.
\end{equation}
We set $c=0.6$ in this subsection and compare (17) with (18) in the following Figure 5(a). Figure 5(b) is from \cite[p. 465]{hull2016options} which is used to explain volatility skew of equity options by comparing implied distribution with lognormal distribution.

\begin{figure}[H]
\centering
  \subfigure[$g_c$ and lognormal]{
  \begin{minipage}{7cm}
  \centering
  \includegraphics[width=1.1\textwidth]{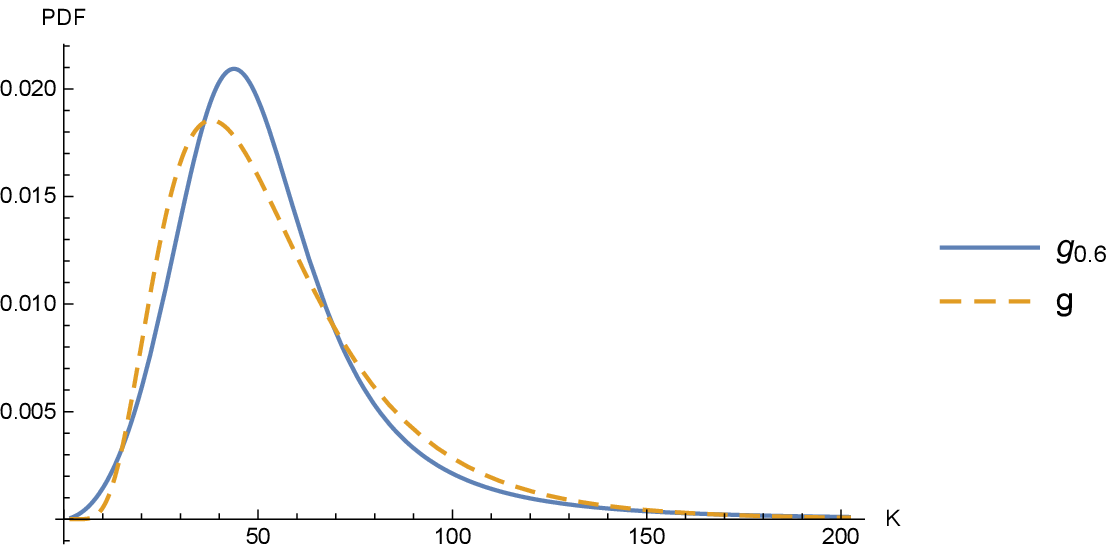}
  \end{minipage}
  }
  \subfigure[implied distribution and lognormal]{
  \begin{minipage}{7cm}
  \centering
  \includegraphics[width=0.8\textwidth]{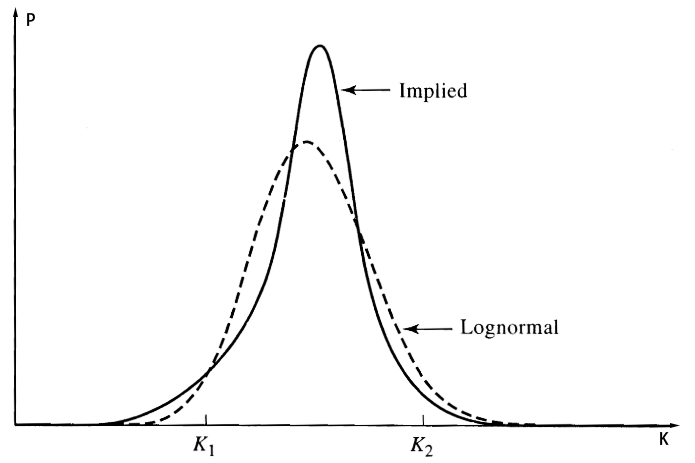}
  \end{minipage}
  }
  \caption{$g_c$, implied distribution and lognormal}
\end{figure}

To our surprise, when comparing with lognormal, $g_c$ in Figure 5(a) has exactly the same features as the implied distribution in Figure 5(b) such as a heavier left tail near $K_1$, a sharper peak with deviation to the right, and a less heavy tail near $K_2$. It is interesting to point out that $g_c$ a heavier right tail than lognormal when $K$ is extremely larger than $K_2$. However, for implied distribution, it is impossible to observe this phenomenon since nobody will trade those options with ridiculously high strike prices. That may be a limitation of option-implied distributions of stock price.

For the convenience of readers, we are going to present the argument in \cite[p. 465]{hull2016options} briefly here to show how Figure 5(b) as well as Figure 5(a) can explain downward volatility skew of equity opitons.

Consider a call with a high strike price $K_2$. Only when the stock price go over $K_2$ can the option pay off. Such probability is lower for the implied distribution than for lognormal which means the Black-Scholes formula overprices those options. By the monotonicity, the implied volatility is lower than $\sigma$ in the Black-Scholes model.

Consider a put with a low strike price $K_1$. Only when the stock price drop below $K_1$ can the option pay off. Such probability is higher for the implied distribution than for lognormal which means the Black-Scholes formula underprices those options. By the monotonicity, the implied volatility is higher than $\sigma$ in the Black-Scholes model.

Therefore Figure 5(b) is consistent with downward volatility skew so is Figure 5(a).

Then we use the data from \cite[p. 97]{daroczi2013introduction} to do an empirical test. It's a set of prices of Google call options, June 25, 2013. The price of Google stock on that day is $S=866.2$. We use smooth curves to fit the volatility skew. Using the Black-Scholes formula, we obtain a smooth curve of option prices to which we can apply the formula from \cite{breeden1978prices} again to calculate the implied distribution. In fact, the volatility curve here still looks like a smile but its lowest point is at around $1050$ which is much larger than $S=866.2$ so we can call it a skew instead. The result is shown in the following figure.
\begin{figure}[H]
  \centering
  \includegraphics[width=0.6\textwidth]{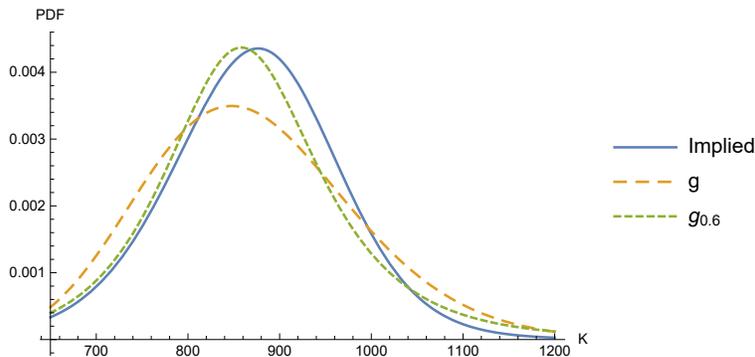}
  \caption{$g_c$, implied distribution of Google stock and lognormal}
\end{figure}
\noindent It is quite obvious that $g_c$ is much closer to the implied distribution than lognormal. Then we can see the advantage of the new distribution family $\{g_c|c\geq c_0\}$. It has just one more parameter $c$ than lognormal while it has almost all the desired features (see Figure 5) to fit those implied distributions from markets. To the best of the author's knowledge, there was no distribution family with such properties before. Finally, we prove the following convergence result so that $\{g_c|c\geq c_0\}$ can be regarded as a powerful generalization of lognormal.
\begin{proposition}
$g_c\rightarrow g$ as $c\rightarrow\infty$.
\end{proposition}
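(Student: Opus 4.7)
The plan is to mirror the proof of Proposition 1, which already handled essentially the same kind of integrand, and then to finish with a Gaussian Fourier transform and a short algebraic identity.

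First I would observe that the integrand in the definition of $g_c$,
$$h(y,c) = \frac{S^{\alpha}}{K^{1+\alpha}}(S/K)^{iy}e^{-(\sqrt{c^2y^2+\nu^2}-\nu)T},$$
has modulus bounded by $\frac{S^{\alpha}}{K^{1+\alpha}}\,e^{-(\sqrt{c^2y^2+\nu^2}-\nu)T}$. Rewriting $\sqrt{c^2y^2+\nu^2}-\nu = \frac{c^2y^2}{\sqrt{c^2y^2+\nu^2}+\nu} = \frac{y^2\sigma^2}{\sqrt{y^2\sigma^4/c^2+1}+1}$, as in equation (15), and using $c\geq c_0=\sigma^2/2+r$, I obtain the $c$-independent integrable dominating function $\widetilde{C}\,\exp\!\bigl(-y^2\sigma^2T/(\sqrt{y^2\sigma^4/c_0^2+1}+1)\bigr)$. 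Hence $\int_{\mathbb{R}} h(y,c)\,dy$ converges uniformly in $c\in[c_0,\infty)$, and Theorem 10 of \cite{trench2012functions} lets me interchange the limit and the integral.

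Next I would use the pointwise limit from (15), $\sqrt{c^2y^2+\nu^2}-\nu\to y^2\sigma^2/2$, to get
\begin{equation*}
\lim_{c\to\infty} g_c = e^{(r-\beta)T}\,\frac{1}{2\pi}\,\frac{S^{\alpha}}{K^{1+\alpha}}\int_{\mathbb{R}}(S/K)^{iy}\,e^{-y^2\sigma^2T/2}\,dy.
\end{equation*}
The remaining integral is the Fourier transform of a Gaussian evaluated at $-\log(S/K)=\log(K/S)$, which equals $\sqrt{2\pi/(\sigma^2T)}\,\exp\!\bigl(-(\log(K/S))^2/(2\sigma^2T)\bigr)$. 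This already produces a lognormal-shaped factor $\frac{1}{K\sqrt{2\pi\sigma^2T}}\exp\!\bigl(-(\log(K/S))^2/(2\sigma^2T)\bigr)$ times $(S/K)^{\alpha}\,e^{(r-\beta)T}$.

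The last step — and the only place where anything could go wrong — is matching this with $g(K;r,\sigma,S,T)$. Expanding $(\log(K/S)+\alpha\sigma^2T)^2$ inside the exponential of $g$ splits off the factor $(S/K)^{\alpha}\,e^{-\alpha^2\sigma^2T/2}$, so the identification reduces to the algebraic identity $r-\beta=-\alpha^2\sigma^2/2$. A direct computation from the definitions gives $\beta - r = \frac{(\sigma^2/2+r)^2 - 2\sigma^2 r}{2\sigma^2} = \frac{(\sigma^2/2-r)^2}{2\sigma^2} = \frac{\alpha^2\sigma^2}{2}$, which closes the proof. I expect this final bookkeeping step to be the main (though still minor) obstacle: the dominated-convergence step is a routine copy of what was done in Proposition 1, but one needs to remember the specific relation between $\alpha$ and $\beta$ to see that the $c\to\infty$ prefactor $e^{(r-\beta)T}$ is exactly what is needed to complete the square and recover the lognormal density.
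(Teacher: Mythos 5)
Your proposal is correct and follows essentially the same route as the paper's own proof: domination by the $c$-independent integrable function from Proposition 1 to justify interchanging limit and integral, the pointwise limit from equation (15), the Gaussian Fourier transform, and the identity $r-\beta=-\alpha^2\sigma^2/2$ to complete the square and recover $g$. The only difference is that you spell out the final algebraic bookkeeping explicitly, which the paper leaves implicit in its chain of equalities.
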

\begin{proof}
Denote the integrand in (17) with $f(y,c)$ (including the constant before the integral). Repeat the argument in the proof of Proposition 1, then we can take the limit into the integral sign. Then we have
\begin{equation}
\begin{aligned}
\lim_{c\rightarrow\infty}g_c&=\lim_{c\rightarrow\infty}\int_\mathbb{R}f(y,c)dy=\int_\mathbb{R}\lim_{c\rightarrow\infty}f(y,c)dy\\
&=e^{-\frac{\alpha^2\sigma^2T}{2}}\frac{(S/K)^\alpha}{K\cdot2\pi}\int_\mathbb{R}e^{-i\log{(K/S)}y}e^{-\frac{y^2\sigma^2T}{2}}dy\\
&=e^{-\frac{\alpha^2\sigma^2T}{2}}\frac{(S/K)^\alpha}{K\cdot\sqrt{2\pi}}\frac{1}{\sqrt{\sigma^2T}}e^{-\frac{\log{(K/S)}^2}{2\sigma^2T}}\\
&=\frac{1}{K\sqrt{2\pi\sigma^2T}}e^{-\frac{(\log(K/S)+\alpha\sigma^2T)^2}{2\sigma^2T}}=g
\end{aligned}
\end{equation}
where we have used the well-known result $(a>0)$ $$\widehat{e^{-\frac{t^2}{2a^2}}}=a\cdot e^{-\frac{\omega^2a^2}{2}}.$$
\end{proof}
\section{Summary}
The relativistic generalized Black-Scholes equation is solved and its closed-form solutions are obtained. New closed-form option pricing formulas are established, containing a new parameter $c$ which can be interpreted as the speed limit of information transfer in markets. It is a generalization of Black-Scholes Formulas in the sense of pointwise convergence. Using new formulas to imply volatility, standard volatility smiles can be flattened. A generalization of lognormal distribution arises from new formulas, also containing $c$. With only one more parameter than lognormal, new distributions have almost all desired features to fit with option-implied distributions of stock price. In this way, downward volatility skews can also be explained.

\clearpage
% Bibliography.

\bibliographystyle{jfe}
\bibliography{bibfile}

% Figures and tables, showing how to structure captions
\clearpage

\end{document}